% ****** Start of file apssamp.tex ******
%
%   This file is part of the APS files in the REVTeX 4.1 distribution.
%   Version 4.1r of REVTeX, August 2010
%
%   Copyright (c) 2009, 2010 The American Physical Society.
%
%   See the REVTeX 4 README file for restrictions and more information.
%
% TeX'ing this file requires that you have AMS-LaTeX 2.0 installed
% as well as the rest of the prerequisites for REVTeX 4.1
%
% See the REVTeX 4 README file
% It also requires running BibTeX. The commands are as follows:
%
%  1)  latex apssamp.tex
%  2)  bibtex apssamp
%  3)  latex apssamp.tex
%  4)  latex apssamp.tex
%

\documentclass[%
 reprint,
superscriptaddress,
%groupedaddress,
%unsortedaddress,
%runinaddress,
%frontmatterverbose,
%preprint,
%showpacs,preprintnumbers,
%nofootinbib,
%nobibnotes,
%bibnotes,
 amsmath,amssymb,
aps,
pra,
%prb,
%rmp,
%prstab,
%prstper,
%floatfix,
]{revtex4-1}

\usepackage{amsthm}
\usepackage{graphicx}% Include figure files
\usepackage{dcolumn}% Align table columns on decimal point
\usepackage{bm}% bold math
\usepackage{enumerate}
\usepackage{algorithm}
\usepackage{algorithmic}

\usepackage{url}
\usepackage{float}

\usepackage{amstext}
%\usepackage{subcaption}
%\usepackage{hyperref}% add hypertext capabilities
%\usepackage[mathlines]{lineno}% Enable numbering of text and display math
%\linenumbers\relax % Commence numbering lines

%\usepackage[showframe,%Uncomment any one of the following lines to test
%%scale=0.7, marginratio={1:1, 2:3}, ignoreall,% default settings
%%text={7in,10in},centering,
%%margin=1.5in,
%%total={6.5in,8.75in}, top=1.2in, left=0.9in, includefoot,
%%height=10in,a5paper,hmargin={3cm,0.8in},
%]{geometry}

\theoremstyle{remark}

\newtheorem{lemma}{\indent Lemma}

\UseRawInputEncoding
\begin{document}

\makeatletter
\newcommand{\ud}{\mathrm{d}}
\newcommand{\rmnum}[1]{\romannumeral #1}
\newcommand{\Rmnum}[1]{\expandafter\@slowromancap\romannumeral #1@}
\newcommand{\udots}{\mathinner{\mskip1mu\raise1pt\vbox{\kern7pt\hbox{.}}
        \mskip2mu\raise4pt\hbox{.}\mskip2mu\raise7pt\hbox{.}\mskip1mu}}
\makeatother

\preprint{APS/123-QED}

\title{Quantum discriminative canonical correlation analysis}

\author{Yong-Mei Li}
\affiliation{State Key Laboratory of Networking and Switching Technology, Beijing University of Posts and Telecommunications, Beijing, 100876, China}
\affiliation{State Key Laboratory of Cryptology, P.O. Box 5159, Beijing, 100878, China}
\author{Hai-Ling Liu}
\affiliation{State Key Laboratory of Networking and Switching Technology, Beijing University of Posts and Telecommunications, Beijing, 100876, China}
\author{Shi-Jie Pan}
\affiliation{State Key Laboratory of Networking and Switching Technology, Beijing University of Posts and Telecommunications, Beijing, 100876, China}
\author{Su-Juan Qin}
\email{qsujuan@bupt.edu.cn}
\affiliation{State Key Laboratory of Networking and Switching Technology, Beijing University of Posts and Telecommunications, Beijing, 100876, China}
\author{Fei Gao}
%\email{gaof@bupt.edu.cn}
\affiliation{State Key Laboratory of Networking and Switching Technology, Beijing University of Posts and Telecommunications, Beijing, 100876, China}
\author{Qiao-Yan Wen}
%\email{wqy@bupt.edu.cn}
\affiliation{State Key Laboratory of Networking and Switching Technology, Beijing University of Posts and Telecommunications, Beijing, 100876, China}

\date{\today}

\begin{abstract}
Discriminative Canonical Correlation Analysis (DCCA) is a powerful supervised feature extraction technique for two sets of multivariate data, which has wide applications in pattern recognition. DCCA consists of two parts: (i) mean-centering that subtracts the sample mean from the sample; (ii) solving the generalized eigenvalue problem. The cost of DCCA is expensive when dealing with a large number of high-dimensional samples. To solve this problem, here we propose a quantum DCCA algorithm. Specifically, we devise an efficient method to compute the mean of all samples, then use block-Hamiltonian simulation and quantum phase estimation to solve the generalized eigenvalue problem. Our algorithm achieves a polynomial speedup in the dimension of samples under certain conditions over its classical counterpart.
\end{abstract}

\pacs{Valid PACS appear here}
\maketitle

\section{Introduction}

As a promising new computing paradigm, quantum computing has shown enormous power over classical computing due to its inherent parallelism and entanglement~\cite{PS1994,GL1996}. In recent years, a series of quantum algorithms have been proposed with significant speedups compared with their classical counterparts, such as classification~\cite{RML,QC1,QC2,jinan}, linear regression~\cite{LR1,LR2,LR3}, dimensionality reduction~\cite{QPCA,QLDA,DYXL,pan2022}, matrix computation~\cite{HHL,DM2018,wan2018,liu}. Designing quantum algorithms to accelerate the corresponding classical ones has appeared as a remarkable emerging direction in the field of quantum computing.

Feature extraction for pairs of multivariate data is an indispensable part of multimodal recognition and information fusion, which aims at extracting the feature pairs from two groups of feature vectors~\cite{DCCA,SCCA}. Canonical Correlation Analysis (CCA)~\cite{CCA1,CCA2} and Partial Least Squares (PLS)~\cite{PLS} are widely-used feature extraction techniques for pairs of multivariate data. For multimodal recognition task, CCA and PLS are unsupervised, that is they do not utilize the class information of the samples, resulting in the constraint of the recognition performance. To solve this problem, Sun et al. proposed a novel combined feature extraction technique known as Discriminative Canonical Correlation Analysis (DCCA)~\cite{DCCA}, which takes the advantage of class label information to find pairs of pairwise projection vectors such that the within-class correlations are maximized and simultaneously the between-class similarities are minimized. DCCA can be described as two steps: (i) mean-centering that subtracts the sample mean from the sample; (ii) solving the generalized eigenvalue problem. Since the time complexity of DCCA depends polynomially on the number and the dimension of samples, it is computationally expensive when dealing with a large number of high-dimensional samples. Therefore, it would be of great interest to design the quantum algorithm for DCCA.

Recently, Koide-Majima et al. presented a quantum-inspired CCA (qiCCA) algorithm~\cite{qiCCA}, which achieves an exponential acceleration on the dimension of samples compared with the original algorithm. After that, Hou et al. suggested a Quantum PLS (QPLS) regression algorithm~\cite{QPLS}, which realizes an exponential speedup in the number of samples and their dimension over its classical counterpart. However, the above algorithms are still unsupervised and can not break through the limitation of the recognition performance in multimodal recognition. It is desirable to design a quantum algorithm for DCCA. DCCA, which utilizes the class label information, is fundamentally different from CCA and PLS, and it seems  infeasible to obtain a quantum DCCA algorithm directly based on the above qiCCA and QPLS.

In this paper we propose a Quantum DCCA (QDCCA) algorithm. First, we propose an efficient method, called Mean Estimation (ME), to compute the mean of all samples. The basic idea is inspired by inner product estimation in Ref.~\cite{qmeans2019}, which uses the amplitude estimation circuit~\cite{BHMT} to compute the inner product of two quantum states and then boosts the probability of success by median estimation~\cite{WKS2015}. We take advantage of the quantum state preparation technique to generalize this method to compute the mean of all elements in a row of any real matrix. The mean-centering of DCCA can be realized by ME and Quantum Multiply-Adder (QMA)~\cite{QMA,QMA2}. Then, inspired by Ref.~\cite{GEP3}, we use block-Hamiltonian simulation~\cite{ICALP2019} and quantum phase estimation to solve the generalized eigenvalue problem. In Ref.~\cite{GEP3}, with the help of the block-encoding technique~\cite{GSL,CGJ,ICALP2019}, Shao et al. put forward a novel quantum algorithm for solving the generalized eigenvalue problem $Dw =\lambda Ew$ when $D$ is Hermitian and $E$ is Hermitian positive definite. Here we design the block-encodings of the related matrices to give a variant of Shao et al.'s algorithm such that $Dw =\lambda Ew$ can be solved when $D$ is Hermitian and $E$ is Hermitian positive semidefinite. Based on this, the generalized eigenvalue problem of DCCA can be solved. As a result, the QDCCA algorithm achieves a polynomial speedup in the dimension of samples under certain conditions over the classical algorithm.

This paper is organized as follows. We review the classical DCCA in Sec.~\ref{Sec:DCCA}. In Sec.~\ref{Sec:QDCCA}, we propose the QDCCA algorithm in Sec.~\ref{subsec:QDCCA} and analyze its complexity in Sec.~\ref{subsec:CA}. The conclusion is given in Sec.~\ref{Sec:con}.

\section{Review of the classical DCCA}\label{Sec:DCCA}

Given $n$ pairs of original pairwise samples $\{(\textbf{a}_i,\textbf{b}_i)\}_{i=1}^n\in R^p\times R^q$ coming from $c$ classes, DCCA aims  to find $d$ pairs of pairwise projection vectors under specified conditions, where $d$ is a prespecified parameter and satisfies the constraints $d\leq \min(p,q)$ and $d\leq c$~\cite{DCCA}. Let $A=[\textbf{a}_1,\textbf{a}_2,...,\textbf{a}_n]\in R^{p\times n}$, $B=[\textbf{b}_1,\textbf{b}_2,...,\textbf{b}_n]\in R^{q\times n}$ denote the original data matrices and $M=\left(\begin{matrix}
   A  \\
   B\\
  \end{matrix}
  \right)$. Here we describe DCCA as the following two steps.

\emph{\textbf{Step 1.} Mean-centering}.

In this step, the Mean-Subtraction (MS) should be used to make all sample features have zero mean. The details are as follows.

(1) Compute two means of all samples:
\begin{eqnarray}
\begin{aligned}
\bar{\textbf{a}}=\frac{\sum_i\textbf{a}_i}{n},\ \  \bar{\textbf{b}}=\frac{\sum_i\textbf{b}_i}{n}.
\end{aligned}
\end{eqnarray}

(2) Each sample subtract the corresponding mean, then the centralized pairwise samples $\{(\textbf{x}_i,\textbf{y}_i)\}_{i=1}^n$ can be obtained where  $\textbf{x}_i :=\textbf{a}_i-\bar{\textbf{a}}$ and $\textbf{y}_i :=\textbf{b}_i-\bar{\textbf{b}}$.

\emph{\textbf{Step 2.} Solve the generalized eigenvalue problem}.

The first pair of pairwise projection vectors $(\textbf{w}_x,\textbf{w}_y)$ of DCCA can be formulated as the following optimization problem:
\begin{eqnarray}
\begin{aligned}
&\mathop{\max}_{\textbf{w}_x,\textbf{w}_y}\ \textbf{w}_x^TXCY^T\textbf{w}_y\\
&s.t.\ \textbf{w}_x^TXX^T\textbf{w}_x=1,\textbf{w}_y^TYY^T\textbf{w}_y=1,
\end{aligned}
\end{eqnarray}
where $X=[\textbf{x}_1,\textbf{x}_2,...,\textbf{x}_n]$, $Y=[\textbf{y}_1,\textbf{y}_2,...,\textbf{y}_n]$ are the centralized date matrices, $C=\text{diag}(1_{n_1},1_{n_2},...,1_{n_c})$, $1_{n_i}$ is an $n_i\times n_i$ matrix with all ones, $n_i$ denotes the number of pairwise samples in the $i$th class, $i=1,2,...,c$ and $\sum_in_i=n$.

Using the Lagrangian multiplier technique, it can be transformed into the generalized eigenvalue problem:
\begin{eqnarray}\label{equ:GE}
\begin{aligned}
\hspace{-5mm}%{}里面负数为左移，正为右移
\left(
 \begin{matrix}
    & XCY^T  \\
   YCX^T &   \\
  \end{matrix}
  \right)
  \left (
 \begin{matrix}
   \textbf{w}_x\\
    \textbf{w}_y\\
  \end{matrix}
  \right  )=\lambda
  \left (
 \begin{matrix}
    XX^T & \\
  & YY^T  \\
  \end{matrix}
  \right  )
  \left (
 \begin{matrix}
   \textbf{w}_x\\
    \textbf{w}_y\\
  \end{matrix}
  \right ). \
\end{aligned}
\end{eqnarray}
The generalized eigenvectors corresponding to the first $d$ largest generalized eigenvalues $\lambda$ are exactly the $d$ pairs of pairwise projection vectors of DCCA.

Let $E^{\frac{1}{2}}
  \textbf{w}=\bf{v}$ where $E :=\text{diag}(XX^T,YY^T)$ and $\textbf{w} :=\left(\begin{matrix}
   \textbf{w}_x\\
    \textbf{w}_y\\
  \end{matrix}\right)$, we can reduce equation (\ref{equ:GE}) to a Hermitian eigenvalue problem:
\begin{eqnarray}
\begin{aligned}
H\bf{v}=\lambda \bf{v},
\end{aligned}
\end{eqnarray}
where $H :=E^{-\frac{1}{2}}DE^{-\frac{1}{2}}$ and $D :=\left(
 \begin{matrix}
    & XCY^T  \\
   YCX^T &   \\
  \end{matrix}
  \right)$.

Once the eigenvectors $\{\textbf{v}_i\}_{i=1}^d$ corresponding to the first $d$ largest eigenvalues of $H$ are obtained, we can get $\{\textbf{w}_i|\textbf{w}_i=E^{-\frac{1}{2}}\textbf{v}_i\}_{i=1}^d$ after postprocessing.

The time complexity of DCCA is $O(n(p+q)+(p+q)^3)$ where $O(n(p+q))$ comes from \emph{\textbf{Step 1}} and $O((p+q)^3)$ comes from \emph{\textbf{Step 2}}.

\section{quantum algorithm for DCCA}\label{Sec:QDCCA}

In this section we first present the QDCCA algorithm in Sec.~\ref{subsec:QDCCA}, then analyze its complexity in Sec.~\ref{subsec:CA}.

We start with some notations that will be useful throughout the paper. Note that we can rewrite $A$ as $A=[A^1,A^2,...,A^c]$, where the submatrix $A^i\in R^{p\times n_i}$ is the original data matrix of the $i$th class, $i=1,2...,c$. Similarly, $B=[B^1,B^2,...,B^c]$, $X=[X^1,X^2,...,X^c]$ and $Y=[Y^1,Y^2,...,Y^c]$. With such representations, we define the matrix $\mathcal{A}$ as $\mathcal{A}= [\mathcal{A}^1,\mathcal{A}^2,...,\mathcal{A}^c]$ where the submatrix $\mathcal{A}^i=[A^i,\textbf{0},...,\textbf{0}]\in R^{p\times n'}$, $\textbf{0}$ is a vector with all zeros and $n'=\max_i {n_i}$. Let $\mathcal{B}= [\mathcal{B}^1,\mathcal{B}^2,...,\mathcal{B}^c]$ where $\mathcal{B}^i=[B^i,\textbf{0},...,\textbf{0}]\in R^{q\times n'}$ and let $\mathcal{M}=\left(\begin{matrix}
   \mathcal{A} \\
   \mathcal{B} \\
  \end{matrix}\right)$.

In our quantum algorithm, to apply quantum phase estimation to reveal the eigenvalues of $H$, we must be able to realize $e^{iHt}$. To achieve it, we first analyze the structure of $H$ and find that $D$ can be rewritten as
\begin{eqnarray}
\begin{aligned}
D=\left(\begin{matrix}
   \mathbb{X}\mathbb{X}^T & \mathbb{X}\mathbb{Y}^T  \\
   \mathbb{Y}\mathbb{X}^T & \mathbb{Y}\mathbb{Y}^T \\
  \end{matrix}
  \right)-
  \left(\begin{matrix}
   \mathbb{X}\mathbb{X}^T & 0  \\
 0& \mathbb{Y}\mathbb{Y}^T \\
  \end{matrix}
  \right) :=J-K,
\end{aligned}
\end{eqnarray}
where
  $\mathbb{X} :=[\sum_j X^1_{*,j},\sum_j X^2_{*,j},...,\sum_j X^c_{*,j}]\in R^{p\times c}$, $\mathbb{Y} :=[\sum_j Y^1_{*,j},\sum_j Y^2_{*,j},...,\sum_j Y^c_{*,j}]\in R^{q\times c}$, and $X^i_{*,j}$, $Y^i_{*,j}$ are the $j$th column of $X^i$, $Y^i$ respectively, $i=1,2,...,c$. Then $H=E^{-\frac{1}{2}}JE^{-\frac{1}{2}}-E^{-\frac{1}{2}}KE^{-\frac{1}{2}}$. Using the block-encoding technique~\cite{GSL,CGJ,ICALP2019}, once the block-encodings of $E$, $J$, $K$ are implemented, the linear combination of block-encoded matrices allows us to construct the block-encoding of $H$, and then we can use the block-Hamiltonian simulation to realize $e^{iHt}$. However, it is not easy to create the block-encodings of $E$, $J$, $K$ directly. Fortunately, the matrices $E$, $J$, $K$ are all positive-semidefinite, we can prepare the density operators $\rho_E :=\frac{E}{tr(E)}$, $\rho_J :=\frac{J}{tr(J)}$, $\rho_K :=\frac{K}{tr(K)}$ and create corresponding block-encodings to realize the block-encoding of $\tilde{H} :=\rho_E^{-\frac{1}{2}}\rho_J\rho_E^{-\frac{1}{2}}-\rho_E^{-\frac{1}{2}}\rho_K\rho_E^{-\frac{1}{2}}=\frac{tr(E)}{tr(J)}H$. It means that we can realize $e^{iHt}$ by implementing $e^{i\tilde{H}t}$.

Our quantum algorithm is divided into four steps: (1) preparing the density operators $\rho_E$, $\rho_J$ and $\rho_K$; (2) designing the block-encoding of $\tilde{H}$; (3) estimating the eigenvalues of $H$ according to quantum phase estimation, and then searching the first $d$ largest eigenvalues of $H$ to get the corresponding eigenvectors $\{|\textbf{v}_i\rangle\}_{i=1}^d$; (4)postprocessing: use quantum matrix inversion technique to get $\{|\textbf{w}_{i}\rangle||\textbf{w}_{i}\rangle \propto E^{-\frac{1}{2}}|\textbf{v}_i\rangle\}_{i=1}^d$.

\subsection{Algorithm}\label{subsec:QDCCA}

Assume that the matrices $M$, $\mathcal{M}$ and the vector $\textbf{c} :=(n_1,n_2,...,n_c)^T$ are stored in Quantum Random Access Memory (QRAM)~\cite{QRAM} which allows
the following mappings to be performed in times $O[\log (n(p+q))]$, $O[\log (cn'(p+q))]$ and $O(\log c)$, respectively.
\begin{eqnarray}
\begin{aligned}
\textbf{O}_M: |i\rangle|j\rangle|0\rangle\rightarrow |i\rangle|j\rangle|M_{ij}\rangle,
\end{aligned}
\end{eqnarray}
\begin{eqnarray}
\begin{aligned}\label{equ:Mc}
\textbf{O}_{\mathcal{M}}: |i\rangle|j\rangle|0\rangle\rightarrow |i\rangle|j\rangle|\mathcal{M}_{ij}\rangle,
\end{aligned}
\end{eqnarray}
\begin{eqnarray}
\begin{aligned}
\textbf{O}_c: |i\rangle|0\rangle\rightarrow |i\rangle|n_i\rangle,
\end{aligned}
\end{eqnarray}
where $M_{ij}$ and $\mathcal{M}_{ij}$ are the $(i,j)$-entries of $M$ and $\mathcal{M}$ respectively.

The following lemma is necessary for our QDCCA algorithm, which is a variant of Lemma A.10 in Supplementary Material of Ref.~\cite{qmeans2019}.
\begin{lemma}(mean estimation).\label{lem:ME}
Assume that the matrix $L\in R^{d_1\times d_2}$ is stored in a QRAM, that is the unitary $\textbf{O}_L: |i\rangle|j\rangle|0\rangle\rightarrow|i\rangle|j\rangle|L_{ij}\rangle$ can be performed in time $O(\log (d_1d_2))$. For any $\Delta>0$ and $\epsilon>0$, there exists a quantum algorithm that computes in time $O(\frac{\max_{ij}|L_{ij}|\log (d_1d_2)\mathrm{log}\frac{1}{\Delta}}{\epsilon})$,
\begin{eqnarray}
\begin{aligned}
U_{mean}: |i\rangle|0\rangle\rightarrow|i\rangle|\bar{L}_{i,*}\rangle,
\end{aligned}
\end{eqnarray}
with  probability at least $1-2\Delta$, where $\bar{L}_{i,*}$ is the mean of all elements in the $i$th row of matrix $L$ and $\epsilon$ is the error of $\bar{L}_{i,*}$.
\end{lemma}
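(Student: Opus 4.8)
The plan is to recognize that the row-mean $\bar{L}_{i,*}=\frac{1}{d_2}\sum_{j=1}^{d_2}L_{ij}$ can be written as a shifted and rescaled measurement probability, and then to read off that probability with amplitude estimation. First I would, starting from $|i\rangle|0\rangle$, prepare the uniform superposition $\frac{1}{\sqrt{d_2}}\sum_j|i\rangle|j\rangle$ over the columns, query the QRAM oracle $\textbf{O}_L$ to obtain $\frac{1}{\sqrt{d_2}}\sum_j|i\rangle|j\rangle|L_{ij}\rangle$, and then perform a controlled rotation on a fresh ancilla that maps $|L_{ij}\rangle|0\rangle$ to $|L_{ij}\rangle(\cos\theta_{ij}|0\rangle+\sin\theta_{ij}|1\rangle)$ with $\sin^2\theta_{ij}=\frac{L_{ij}+C}{2C}$, where $C:=\max_{ij}|L_{ij}|$ guarantees the argument lies in $[0,1]$. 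After uncomputing the $|L_{ij}\rangle$ register, the probability of finding the ancilla in $|1\rangle$ is exactly
\begin{eqnarray}
\begin{aligned}
P_i=\frac{1}{d_2}\sum_j\frac{L_{ij}+C}{2C}=\frac{\bar{L}_{i,*}}{2C}+\frac{1}{2},
\end{aligned}
\end{eqnarray}
so that $P_i$ encodes the mean together with its sign through $\bar{L}_{i,*}=C(2P_i-1)$. The affine shift by $C$ is precisely what lets me handle negative entries cleanly, avoiding the sign ambiguity that a bare amplitude-squared estimate would suffer; this is the point at which the construction departs from the inner-product form of Lemma A.10 in Ref.~\cite{qmeans2019}.

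Next I would invoke the amplitude estimation circuit of Ref.~\cite{BHMT} with the above state preparation as its oracle and ``ancilla $=|1\rangle$'' as the marked subspace. Running $M$ iterations returns an estimate $\tilde{P}_i$ with $|\tilde{P}_i-P_i|=O(1/M)$ with constant probability, and a final arithmetic step computes $C(2\tilde{P}_i-1)$ into the output register, realizing $|i\rangle|0\rangle\to|i\rangle|\bar{L}_{i,*}\rangle$ up to the estimation error. Because the entire construction---uniform preparation, oracle query, controlled rotation, and the Grover operator of amplitude estimation---depends on $i$ only through unitaries acting coherently on the $|i\rangle$ register, it can be run in superposition over $i$, which is exactly what is needed to obtain the claimed unitary $U_{mean}$ rather than a single scalar readout for a fixed row.

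For the error budget, note that $\bar{L}_{i,*}=C(2P_i-1)$ has error $2C\,|\tilde{P}_i-P_i|$, so to achieve accuracy $\epsilon$ in the mean I need $|\tilde{P}_i-P_i|=O(\epsilon/C)$, i.e. $M=O(C/\epsilon)=O(\max_{ij}|L_{ij}|/\epsilon)$ amplitude-estimation iterations. Each iteration costs one (controlled) state preparation, dominated by the oracle call $\textbf{O}_L$ at $O(\log(d_1d_2))$ and accompanied only by the $O(\log d_2)$ uniform preparation and a poly-logarithmic rotation, giving $O(\max_{ij}|L_{ij}|\log(d_1d_2)/\epsilon)$ so far. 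Finally I would boost the constant success probability of amplitude estimation to $1-2\Delta$ by the median-estimation (powering) lemma of Ref.~\cite{WKS2015}: run $O(\log\frac{1}{\Delta})$ independent copies coherently into separate registers and compute their median, contributing the factor $\log\frac{1}{\Delta}$ and yielding the stated runtime $O(\frac{\max_{ij}|L_{ij}|\log(d_1d_2)\log\frac{1}{\Delta}}{\epsilon})$. I expect the main obstacle to be keeping this last stage coherent: amplitude estimation and the median must be carried out as a unitary conditioned on $|i\rangle$ rather than as a terminal measurement, and one must argue that the garbage registers produced by phase estimation and by the median circuit can be controlled so that the output is, to the stated precision and with the stated success probability, the clean state $|i\rangle|\bar{L}_{i,*}\rangle$.
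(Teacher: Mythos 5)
Your proposal is correct and reaches the same complexity, but it encodes the signed mean into a measurable probability by a different mechanism than the paper. The paper follows the inner-product estimation of Ref.~\cite{qmeans2019}: it rotates an ancilla by the \emph{signed} amplitude $L_{ij}/C$ inside one branch of a Hadamard test, so that the probability of the control qubit reading $1$ is $P_{i1}=\tfrac{1}{2}\left(1-\langle\phi_i|\varphi\rangle\right)$ with $\langle\phi_i|\varphi\rangle=\bar{L}_{i,*}/C$, the sign surviving through interference; you instead shift the rotation probability affinely, $\sin^2\theta_{ij}=(L_{ij}+C)/(2C)$, obtaining $P_i=\tfrac{1}{2}+\bar{L}_{i,*}/(2C)$ with no interference register at all. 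The two encodings are affinely equivalent (your $P_i$ is $1-P_{i1}$ up to the paper's sign convention), so the downstream pipeline is identical in both arguments: BHMT amplitude estimation~\cite{BHMT} with $M=O(C/\epsilon)$ iterations to get $|\tilde{P}_i-P_i|=O(\epsilon/C)$ (the paper's $\epsilon/(2C)$ target), median boosting with $O(\log\frac{1}{\Delta})$ copies via Ref.~\cite{WKS2015}, and coherent arithmetic inverting the affine map, yielding the stated runtime. Your route buys self-containedness and one fewer qubit --- you need only that $C=\max_{ij}|L_{ij}|$ puts the shifted argument in $[0,1]$ --- while the paper's route buys near-verbatim reuse of Lemma A.10 of Ref.~\cite{qmeans2019}, including its coherent handling of amplitude estimation over a superposed $|i\rangle$. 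The caveat you flag at the end is legitimate but is resolved in the paper exactly as you anticipate: the median-evaluation lemma guarantees a state $\sqrt{2\Delta}$-close to $|0\rangle^{\otimes ls}|\tilde{P}_{i1},G\rangle$, i.e.\ the estimate sits in a clean register while garbage $G$ is retained but decoupled, which is precisely the guarantee your construction needs (and, like the paper, leaves $U_{mean}$ exact only up to that garbage and the $\sqrt{2\Delta}$ deviation).
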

\begin{proof}
See Appendix~\ref{app:proof}.
\end{proof}

Now we detail the process of the QDCCA algorithm.

\emph{\textbf{Step 1.} Prepare the density operators $\rho_E$, $\rho_J$ and $\rho_K$. }

The matrices $E$, $J$, $K$ can be decomposed into $E=\acute{E}\acute{E}^T$, $J=\acute{J}\acute{J}^T$ and $K=\acute{K}\acute{K}^T$ respectively, where
  $\acute{E} :=\text{diag}(X,Y)$,
  $\acute{J} :=\left(\begin{matrix}
   \mathbb{X}    \\
    \mathbb{Y}   \\
  \end{matrix}
  \right)$,
    $\acute{
    K} :=\text{diag}(\mathbb{X},\mathbb{Y})$. According to the construction of $E$, $J$, $K$, we find that $\rho_E$, $\rho_J$ and $\rho_K$ can be obtained by tracing out the first register from the following three quantum states respectively: $|\psi_E\rangle=\frac{1}{\|\acute{E}\|_F}\sum_{j=1}^{2n}\sum_{i=1}^{p+q}\acute{E}_{ij}|j\rangle|i\rangle$, $|\psi_J\rangle=\frac{1}{\|\acute{J}\|_F}\sum_{i=1}^{c}\sum_{k=1}^{p+q}\acute{J}_{ki}|i\rangle|k\rangle$ and $|\psi_K\rangle=\frac{1}{\|\acute{K}\|_F}\sum_{i=1}^{2c}\sum_{k=1}^{p+q}\acute{K}_{ki}|i\rangle|k\rangle$. That is to say, once these states are obtained, we can get $\rho_E$, $\rho_J$ and $\rho_K$.

We now elaborate how to prepare the states $|\psi_E\rangle$, $|\psi_J\rangle$ and $|\psi_K\rangle$. For simplicity, here we use $I_a$ to represent the identity operator acting on the $a$th register. The details are as follows.

(1) Prepare the state $|\psi_E\rangle$.

According to the construction of matrices $M$ and $\acute{E}$, we know
\begin{eqnarray}
    \begin{aligned}
    \acute{E}_{ij}=
            \begin{cases}
            M_{ij}-\bar{M}_{i,*}, & 1\leq i\leq p,\ 1 \leq j\leq n;\\
M_{i,j-n}-\bar{M}_{i,*},&p+1\leq i\leq p+q,\\& \ n+1\leq j\leq 2n; \\
0,&\text{otherwise}.\\
            \end{cases}
    \end{aligned}
    \end{eqnarray}
We can use \emph{Lemma}~\ref{lem:ME} to calculate $\bar{M}_{i,*}$, then combine with QMA to obtain the state $\frac{1}{\sqrt{2n}}\sum_{j=1}^{2n}|j\rangle\frac{1}{\sqrt{p+q}}\sum_{i=1}^{p+q}|i\rangle|\acute{E}_{ij}\rangle$. Afterwards, it is possible to perform controlled rotation~\cite{CR} and fixed-point quantum search~\cite{fix2005,fix2014} to obtain $|\psi_E\rangle$. The specific process is as follows.

(1.1) Prepare the initial state
\begin{eqnarray}
\begin{aligned}
\frac{1}{\sqrt{2n}}\sum_{j=1}^{2n}|j\rangle_1\frac{1}{\sqrt{p+q}}\sum_{i=1}^{p+q}|i\rangle_2|0\rangle_3|0\rangle_4|0\rangle_5^{\otimes 2}.
\end{aligned}
\end{eqnarray}

(1.2) By \emph{Lemma}~\ref{lem:ME}, with a given $\textbf{O}_M$, we can implement $U_{mean}$ on the second and third registers to obtain
\begin{eqnarray}
\begin{aligned}
\frac{1}{\sqrt{2n}}\sum_{j=1}^{2n}|j\rangle_1\frac{1}{\sqrt{p+q}}\sum_{i=1}^{p+q}|i\rangle_2|\bar{M}_{i,*}\rangle_3|0\rangle_4|0\rangle_5^{\otimes 2}.
\end{aligned}
\end{eqnarray}

(1.3) Perform $U=\sum_{j=1}^n|j\rangle\langle j|\otimes \sum_{i=1}^p|i\rangle\langle i|\otimes (I\otimes I) + \sum_{j=1}^n|j\rangle\langle j|\otimes \sum_{i=p+1}^{p+q}|i\rangle\langle i|\otimes (I\otimes X)+\sum_{j=n+1}^{2n}|j\rangle\langle j|\otimes \sum_{i=1}^{p}|i\rangle\langle i|\otimes (X\otimes I)+\sum_{j=n+1}^{2n}|j\rangle\langle j|\otimes \sum_{i=p+1}^{p+q}|i\rangle\langle i|\otimes (X\otimes X)$ on the first, second and fifth registers to get
\begin{eqnarray}
\begin{aligned}
&\frac{1}{\sqrt{2n}}\bigg[\sum_{j=1}^{n}|j\rangle_1\frac{1}{\sqrt{p+q}}\bigg(\sum_{i=1}^{p}|i\rangle_2|\bar{M}_{i,*}\rangle_3|0\rangle_4|00\rangle_5\\
&+\sum_{i=p+1}^{p+q}|i\rangle_2|\bar{M}_{i,*}\rangle_3|0\rangle_4|01\rangle_5\bigg)\\
&+\sum_{j=n+1}^{2n}|j\rangle_1\frac{1}{\sqrt{p+q}}\bigg(\sum_{i=1}^{p}|i\rangle_2|\bar{M}_{i,*}\rangle_3|0\rangle_4|10\rangle_5\\
&+\sum_{i=p+1}^{p+q}|i\rangle_2|\bar{M}_{i,*}\rangle_3|0\rangle_4|11\rangle_5\bigg)\bigg],
\end{aligned}
\end{eqnarray}
where $X$ is the Pauli-$X$ gate and $I$ is the identity operator.

(1.4) Given a unitary $U_M: |i\rangle|j\rangle|0\rangle\rightarrow |i\rangle|j\rangle|M_{i,j-n}\rangle$ for $j=n+1,n+2,...,2n$, the following state can be obtained by performing
$\textbf{O}_M\otimes |00\rangle \langle 00|+U_M\otimes |11\rangle \langle 11|+I_{1,2,4}\otimes(|01\rangle \langle 01|+|10\rangle \langle 10|)$ on the first, second, fourth and fifth registers.
\begin{eqnarray}
\begin{aligned}
&\frac{1}{\sqrt{2n}}\bigg[\sum_{j=1}^{n}|j\rangle_1\frac{1}{\sqrt{p+q}}\bigg(\sum_{i=1}^{p}|i\rangle_2|\bar{M}_{i,*}\rangle_3|M_{ij}\rangle_4|00\rangle_5\\
&+\sum_{i=p+1}^{p+q}|i\rangle_2|\bar{M}_{i,*}\rangle_3|0\rangle_4|01\rangle_5\bigg)\\
&+\sum_{j=n+1}^{2n}|j\rangle_1\frac{1}{\sqrt{p+q}}\bigg(\sum_{i=1}^{p}|i\rangle_2|\bar{M}_{i,*}\rangle_3|0\rangle_4|10\rangle_5\\
&+\sum_{i=p+1}^{p+q}|i\rangle_2|\bar{M}_{i,*}\rangle_3|M_{i,j-n}\rangle_4|11\rangle_5\bigg)\bigg].
\end{aligned}
\end{eqnarray}

We now detail the $U_M$. For $j=n+1,n+2,...,2n$, we can first construct the unitary
$|i\rangle|j\rangle|0\rangle|0\rangle\rightarrow |i\rangle|j\rangle|j-n\rangle|0\rangle$ based on a unitary $|j\rangle|0\rangle\rightarrow|j\rangle|f(j)\rangle$ where $f(j)=j-n$, then apply $\textbf{O}_{M}$ to the first, third and fourth registers, the state $|i\rangle|j\rangle|M_{i,j-n}\rangle$ can be obtained after uncomputing the third register.

(1.5) Implement $U_{QMA}\otimes (|00\rangle \langle 00|+|11\rangle \langle 11|)+I_{3,4}\otimes(|01\rangle \langle 01|+|10\rangle \langle 10|)$ on the third, fourth and fifth registers to get
\begin{eqnarray}
\begin{aligned}
&\frac{1}{\sqrt{2n}}\bigg[\sum_{j=1}^{n}|j\rangle_1\frac{1}{\sqrt{p+q}}\bigg(\sum_{i=1}^{p}|i\rangle_2|\bar{M}_{i,*}\rangle_3|\acute{E}_{ij}\rangle_4|00\rangle_5\\
&+\sum_{i=p+1}^{p+q}|i\rangle_2|\bar{M}_{i,*}\rangle_3|0\rangle_4|01\rangle_5\bigg)\\
&+\sum_{j=n+1}^{2n}|j\rangle_1\frac{1}{\sqrt{p+q}}\bigg(\sum_{i=1}^{p}|i\rangle_2|\bar{M}_{i,*}\rangle_3|0\rangle_4|10\rangle_5\\
&+\sum_{i=p+1}^{p+q}|i\rangle_2|\bar{M}_{i,*}\rangle_3|\acute{E}_{ij}\rangle_4|11\rangle_5\bigg)\bigg],
\end{aligned}
\end{eqnarray}
where $U_{QMA}$ represents QMA acting on the third and fourth registers.

(1.6) Add an ancillary qubit $|0\rangle$, carry out a appropriate controlled rotation on the ancillary qubit, then uncompute the third, fourth and fifth registers to get
\begin{eqnarray}
\begin{aligned}
&\frac{1}{\sqrt{2n}}\sum_{j=1}^{2n}|j\rangle_1\frac{1}{\sqrt{p+q}}\sum_{i=1}^{p+q}|i\rangle_2\bigg(\frac{\acute{E}_{ij}}{\alpha}|0\rangle_6\\
&+\sqrt{1-(\frac{\acute{E}_{ij}}{\alpha})^2}|1\rangle_6\bigg),
\end{aligned}
\end{eqnarray}
where $\alpha=2\max_{ij}|M_{ij}|$. See Appendix \ref{app:pa} for more details of $\alpha$.

(1.7) Amplifying the amplitude of $|0\rangle_6$ by fixed-point quantum search~\cite{fix2005,fix2014}, then we can get the approximate state of $|\psi_E\rangle$.

(2) Prepare the state $|\psi_J\rangle$.

Note that we can rewrite $\mathcal{M}$ as $\mathcal{M}=[\mathcal{M}^1,\mathcal{M}^2,...,\mathcal{M}^c]$, where $\mathcal{M}^i=\left(\begin{matrix}
   \mathcal{A}^i   \\
   \mathcal{B}^i  \\
  \end{matrix}
  \right)$, $i=1,2,...,c$. From the construction of matrices $\mathcal{M}$ and $\acute{J}$, we know $\acute{J}_{ki} =n'\bar{\mathcal{M}}^i_{k,*}-n_i\bar{M}_{k,*}$ where  $\bar{\mathcal{M}}^i_{k,*}$ is the mean of elements in the $k$th row of $\mathcal{M}^i$. We can use \emph{Lemma}~\ref{lem:ME} to calculate $\bar{\mathcal{M}}^i_{k,*}$ and $\bar{M}_{i,*}$, then use $\textbf{O}_c$ and QMA to obtain the state $\frac{1}{\sqrt{c}}\sum_{i=1}^{c}|i\rangle\frac{1}{\sqrt{p+q}}\sum_{k=1}^{p+q}|k\rangle|\acute{J}_{ki}\rangle$. Afterwards, we use controlled rotation and fixed-point quantum search to get $|\psi_J\rangle$. Details are as follows.

(2.1) Prepare the initial state
\begin{eqnarray}
\begin{aligned}
\frac{1}{\sqrt{c}}\sum_{i=1}^{c}|i\rangle_1\frac{1}{\sqrt{p+q}}\sum_{k=1}^{p+q}|k\rangle_2|0\rangle_3|0\rangle_4|0\rangle_5|0\rangle_6.
\end{aligned}
\end{eqnarray}

(2.2) By \emph{Lemma}~\ref{lem:ME} and a unitary $U_{\mathcal{M}}: |i\rangle|k\rangle|j\rangle|0\rangle\rightarrow |i\rangle|k\rangle|j\rangle|\mathcal{M}_{kj}^{i}\rangle$ for $j=1,2,...,n'$, we can realize the unitary $\widetilde{U}_{mean}: |i\rangle|k\rangle|0\rangle\rightarrow|i\rangle|k\rangle|\bar{\mathcal{M}}^i_{k,*}\rangle$. Then the following state can be obtained after performing $\widetilde{U}_{mean}$ on the first three registers.
\begin{eqnarray}
\begin{aligned}
&\frac{1}{\sqrt{c}}\sum_{i=1}^{c}|i\rangle_1\frac{1}{\sqrt{p+q}}\sum_{k=1}^{p+q}|k\rangle_2|\bar{\mathcal{M}}^i_{k,*}\rangle_3|0\rangle_4|0\rangle_5|0\rangle_6.
\end{aligned}
\end{eqnarray}

The details of $U_{\mathcal{M}}$ are as follows. According to the construction of matrix $\mathcal{M}$, we have $\mathcal{M}_{kj}^{i}=\mathcal{M}_{k,(i-1)n'+j}$ for $j=1,2,...,n'$. We can first realize the unitary mapping: $|i\rangle|k\rangle|j\rangle|0\rangle|0\rangle
\rightarrow |i\rangle|k\rangle|j\rangle|(i-1)n'+j\rangle|0\rangle$, then perform $\textbf{O}_{\mathcal{M}}$ on the second, fourth and fifth registers. The state $|i\rangle|k\rangle|j\rangle|\mathcal{M}_{kj}^{i}\rangle$ can be obtained after discarding the fourth register.

(2.3) Perform $U_f : |x\rangle|0\rangle\rightarrow |x\rangle|f(x)\rangle$ on the third and fourth registers with function $f(x)=n'x$ which can be calculated efficiently in classical. Then we can get
\begin{eqnarray}
\begin{aligned}
&\frac{1}{\sqrt{c}}\sum_{i=1}^{c}|i\rangle_1\frac{1}{\sqrt{p+q}}\sum_{k=1}^{p+q}|k\rangle_2|x\rangle_3|n'x\rangle_4|0\rangle_{5}|0\rangle_6,
\end{aligned}
\end{eqnarray}
where $x=\bar{\mathcal{M}}^i_{k,*}$.

(2.4) By \emph{Lemma}~\ref{lem:ME}, with a given $\textbf{O}_M$, we can perform $U_{mean}$ on the second and fifth registers to yield
\begin{eqnarray}
\begin{aligned}
&\frac{1}{\sqrt{c}}\sum_{i=1}^{c}|i\rangle_1\frac{1}{\sqrt{p+q}}\sum_{k=1}^{p+q}|k\rangle_2|x\rangle_3|n'x\rangle_4
|\bar{M}_{k,*}\rangle_5|0\rangle_6.
\end{aligned}
\end{eqnarray}

(2.5) Apply $\textbf{O}_c$ to the first and sixth registers. The following state can be obtained after implementing QMA on the fourth, fifth and sixth registers.
\begin{eqnarray}
\begin{aligned}%{}里面负数为左移，正为右移
&\frac{1}{\sqrt{c}}\sum_{i=1}^{c}|i\rangle_1\frac{1}{\sqrt{p+q}}\sum_{k=1}^{p+q}|k\rangle_2|x\rangle_3
|\acute{J}_{ki}\rangle_4|\bar{M}_{k,*}\rangle_5|n_i\rangle_6,
\end{aligned}
\end{eqnarray}
where $\acute{J}_{ki}=n'\bar{\mathcal{M}}^i_{k,*}-n_i\bar{M}_{k,*}$.

(2.6) Add a qubit and rotating conditioned on $|\acute{J}_{ki}\rangle_4$ to get
\begin{eqnarray}
\begin{aligned}
\hspace{-7mm}%{}里面负数为左移，正为右移
&\frac{1}{\sqrt{c}}\sum_{i=1}^{c}|i\rangle_1\frac{1}{\sqrt{p+q}}\sum_{k=1}^{p+q}|k\rangle_2|x\rangle_3|\acute{J}_{ki}\rangle_4
\\&|\bar{M}_{k,*}\rangle_5|n_i\rangle_6\bigg(\frac{\acute{J}_{ki}}{\beta}|0\rangle_7
+\sqrt{1-(\frac{\acute{J}_{ki}}{\beta})^2}|1\rangle_7\bigg).
\end{aligned}
\end{eqnarray}
where $\beta=2n'\max_{ij}|M_{ij}|$. See Appendix \ref{app:pa} for more details of $\beta$.

(2.7) Uncompute the redundant registers. Then we can use fixed-point quantum search to amplify the amplitude of $|0\rangle_7$ to get the approximate state of $|\psi_J\rangle$.

(3) Prepare the state $|\psi_K\rangle$.

From the construction of matrices $\mathcal{M}$ and $\acute{K}$, we have
\begin{eqnarray}
    \begin{aligned}
    \acute{K}_{ki}=
            \begin{cases}
            n'\bar{\mathcal{M}}^i_{k,*}-n_i\bar{M}_{k,*}, & 1\leq k\leq p,\ 1\leq i\leq c;\\
n'\bar{\mathcal{M}}^{i-c}_{k,*}-n_{i-c}\bar{M}_{k,*}, & p+1\leq k\leq p+q, \\
& c+1\leq i\leq 2c; \\
0, & \text{otherwise}.\\
            \end{cases}
    \end{aligned}
    \end{eqnarray}
Similar to $|\psi_J\rangle$, we can first prepare the state $\frac{1}{\sqrt{2c}}\sum_{i=1}^{2c}|i\rangle\frac{1}{\sqrt{p+q}}\sum_{k=1}^{p+q}|k\rangle|\acute{K}_{ki}\rangle$, then use controlled rotation and fixed-point quantum search to get $|\psi_K\rangle$.

(3.1) Prepare the initial state
\begin{eqnarray}
\begin{aligned}
\frac{1}{\sqrt{2c}}\sum_{i=1}^{2c}|i\rangle_1\frac{1}{\sqrt{p+q}}\sum_{k=1}^{p+q}|k\rangle_2|0\rangle_3|0\rangle_4|0\rangle_5|0\rangle_6|0\rangle_7^{\otimes 2}.\ \
\end{aligned}
\end{eqnarray}

(3.2) Perform $U=\sum_{i=1}^c|i\rangle\langle i|\otimes \sum_{k=1}^p|k\rangle\langle k|\otimes (I\otimes I) + \sum_{i=1}^c|i\rangle\langle i|\otimes \sum_{k=p+1}^{p+q}|k\rangle\langle k|\otimes (I\otimes X)+\sum_{i=c+1}^{2c}|i\rangle\langle i|\otimes \sum_{k=1}^{p}|k\rangle\langle k|\otimes (X\otimes I)+\sum_{i=c+1}^{2c}|i\rangle\langle i|\otimes \sum_{k=p+1}^{p+q}|k\rangle\langle k|\otimes (X\otimes X)$ on the first, second and seventh registers to get
\begin{eqnarray}
\begin{aligned}
&\frac{1}{\sqrt{2c}}\bigg[\sum_{i=1}^{c}|i\rangle_1\frac{1}{\sqrt{p+q}}\bigg(\sum_{k=1}^{p}|k\rangle_2|0\rangle_3|0\rangle_4|0\rangle_5|0\rangle_6|00\rangle_7
\\
&+\sum_{k=p+1}^{p+q}|k\rangle_2|0\rangle_3|0\rangle_4|0\rangle_5|0\rangle_6|01\rangle_7\bigg)\\
&+\sum_{i=c+1}^{2c}|i\rangle_1\frac{1}{\sqrt{p+q}}\bigg(\sum_{k=1}^{p}|k\rangle_2|0\rangle_3|0\rangle_4|0\rangle_5|0\rangle_6|10\rangle_7\\
&+\sum_{k=p+1}^{p+q}|k\rangle_2|0\rangle_3|0\rangle_4|0\rangle_5|0\rangle_6|11\rangle_7\bigg)\bigg],
\end{aligned}
\end{eqnarray}
where $X$ is the Pauli-$X$ gate and $I$ is the identity operator.

(3.3) Given a unitary $\tilde{U}_{\mathcal{M}}$ which can be used to perform the mapping $|i\rangle|k\rangle|j\rangle|0\rangle\rightarrow |i\rangle|k\rangle|j\rangle|\mathcal{M}_{kj}^{i-c}\rangle$ for $i=c+1,c+2,...,2c$ and $j=1,2,...,n'$, we can realize the unitary $U:|i\rangle|k\rangle|0\rangle\rightarrow |i\rangle|k\rangle|\bar{\mathcal{M}}^{i-c}_{k,*}\rangle$, $i=c+1,c+2,...,2c$ by \emph{Lemma}~\ref{lem:ME}. The realization of unitary $\tilde{U}_{\mathcal{M}}$ is similar to $U_{\mathcal{M}}$ in stage (2.2). With the $\widetilde{U}_{mean}$ in stage (2.2), we perform $\widetilde{U}_{mean}\otimes |00\rangle \langle 00|+U\otimes |11\rangle \langle 11|+I_{1,2,3}\otimes(|01\rangle \langle 01|+|10\rangle \langle 10|)$ on the first three registers and the seventh register to get
\begin{eqnarray}
\begin{aligned}
&\frac{1}{\sqrt{2c}}\bigg[\sum_{i=1}^{c}|i\rangle_1\frac{1}{\sqrt{p+q}}\bigg(\sum_{k=1}^{p}|k\rangle_2|\bar{\mathcal{M}}^i_{k,*}\rangle_3|0\rangle_4|0\rangle_5|0\rangle_6|00\rangle_7
\\
&+\sum_{k=p+1}^{p+q}|k\rangle_2|0\rangle_3|0\rangle_4|0\rangle_5|0\rangle_6|01\rangle_7\bigg)\\
&+\sum_{i=c+1}^{2c}|i\rangle_1\frac{1}{\sqrt{p+q}}\bigg(\sum_{k=1}^{p}|k\rangle_2|0\rangle_3|0\rangle_4|0\rangle_5|0\rangle_6|10\rangle_7\\
&+\sum_{k=p+1}^{p+q}|k\rangle_2|\bar{\mathcal{M}}^{i-c}_{k,*}\rangle_3|0\rangle_4|0\rangle_5|0\rangle_6|11\rangle_7\bigg)\bigg].
\end{aligned}
\end{eqnarray}

(3.4) Similar to stage (2.3), the unitary $U_f : |x\rangle|0\rangle\rightarrow |x\rangle|f(x)\rangle$ with function $f(x)=n'x$ can be performed efficiently. We implement $U_f\otimes (|00\rangle \langle 00|+|11\rangle \langle 11|)+I_{3,4}\otimes(|01\rangle \langle 01|+|10\rangle \langle 10|)$ on the third, fourth and seventh registers to obtain
\begin{eqnarray}
\begin{aligned}
&\frac{1}{\sqrt{2c}}\bigg[\sum_{i=1}^{c}|i\rangle_1\frac{1}{\sqrt{p+q}}\bigg(\sum_{k=1}^{p}|k\rangle_2|\bar{\mathcal{M}}^i_{k,*}\rangle_3|n'\bar{\mathcal{M}}^i_{k,*}\rangle_4
\\
&|0\rangle_5|0\rangle_6|00\rangle_7
+\sum_{k=p+1}^{p+q}|k\rangle_2|0\rangle_3|0\rangle_4|0\rangle_5|0\rangle_6|01\rangle_7\bigg)\\
&+\sum_{i=c+1}^{2c}|i\rangle_1\frac{1}{\sqrt{p+q}}\bigg(\sum_{k=1}^{p}|k\rangle_2|0\rangle_3|0\rangle_4|0\rangle_5|0\rangle_6|10\rangle_7\\
&+\sum_{k=p+1}^{p+q}|k\rangle_2|\bar{\mathcal{M}}^{i-c}_{k,*}\rangle_3|n'\mathcal{M}^{i-c}_{k,*}\rangle_4|0\rangle_5|0\rangle_6|11\rangle_7\bigg)\bigg],
\end{aligned}
\end{eqnarray}
where $x=\bar{\mathcal{M}}^i_{k,*}$ or $\bar{\mathcal{M}}^{i-c}_{k,*}$.

(3.5) By \emph{Lemma}~\ref{lem:ME}, with a given $\textbf{O}_M$, we can perform $U_{mean}$ on the second and fifth registers to yield
\begin{eqnarray}
\begin{aligned}
&\frac{1}{\sqrt{2c}}\bigg[\sum_{i=1}^{c}|i\rangle_1\frac{1}{\sqrt{p+q}}\bigg(\sum_{k=1}^{p}|k\rangle_2|\bar{\mathcal{M}}^i_{k,*}\rangle_3|n'\bar{\mathcal{M}}^i_{k,*}\rangle_4|\bar{M}_{k,*}\rangle_5
\\
&|0\rangle_6|00\rangle_7
+\sum_{k=p+1}^{p+q}|k\rangle_2|0\rangle_3|0\rangle_4|\bar{M}_{k,*}\rangle_5|0\rangle_6|01\rangle_7\bigg)\\
&+\sum_{i=c+1}^{2c}|i\rangle_1\frac{1}{\sqrt{p+q}}\bigg(\sum_{k=1}^{p}|k\rangle_2|0\rangle_3|0\rangle_4|\bar{M}_{k,*}\rangle_5|0\rangle_6|10\rangle_7\\
&+\sum_{k=p+1}^{p+q}|k\rangle_2|\bar{\mathcal{M}}^{i-c}_{k,*}\rangle_3|n'\mathcal{M}^{i-c}_{k,*}\rangle_4|\bar{M}_{k,*}\rangle_5|0\rangle_6|11\rangle_7\bigg)\bigg].
\end{aligned}
\end{eqnarray}

(3.6) Using $\textbf{O}_c$, we can realize a unitary $U_c: |i\rangle|0\rangle\rightarrow |i\rangle|n_{i-c}\rangle$ for $i=c+1,c+2,...,2c$. We then implement $\textbf{O}_c\otimes|00\rangle \langle 00| +U_c\otimes|11\rangle \langle 11|+I_{1,6}\otimes(|01\rangle \langle 01|+|10\rangle \langle 10|)$ on the first, sixth and seventh registers to get
\begin{eqnarray}
\begin{aligned}
&\frac{1}{\sqrt{2c}}\bigg[\sum_{i=1}^{c}|i\rangle_1\frac{1}{\sqrt{p+q}}\bigg(\sum_{k=1}^{p}|k\rangle_2|\bar{\mathcal{M}}^i_{k,*}\rangle_3|n'\bar{\mathcal{M}}^i_{k,*}\rangle_4|\bar{M}_{k,*}\rangle_5
\\
&|n_i\rangle_6|00\rangle_7
+\sum_{k=p+1}^{p+q}|k\rangle_2|0\rangle_3|0\rangle_4|\bar{M}_{k,*}\rangle_5|0\rangle_6|01\rangle_7\bigg)\\
&+\sum_{i=c+1}^{2c}|i\rangle_1\frac{1}{\sqrt{p+q}}\bigg(\sum_{k=1}^{p}|k\rangle_2|0\rangle_3|0\rangle_4|\bar{M}_{k,*}\rangle_5|0\rangle_6|10\rangle_7\\
&+\sum_{k=p+1}^{p+q}|k\rangle_2|\bar{\mathcal{M}}^{i-c}_{k,*}\rangle_3|n'\mathcal{M}^{i-c}_{k,*}\rangle_4|\bar{M}_{k,*}\rangle_5|n_{i-c}\rangle_6|11\rangle_7\bigg)\bigg].
\end{aligned}
\end{eqnarray}

(3.7) Perform $U_{QMA}\otimes(|00\rangle \langle 00| +\otimes|11\rangle \langle 11|)+I_{4,5,6}\otimes(|01\rangle \langle 01|+|10\rangle \langle 10|)$ on the fourth, fifth, sixth and seventh registers, then uncompute the redundant registers to get
\begin{eqnarray}
\begin{aligned}
\frac{1}{\sqrt{2c}}\sum_{i=1}^{2c}|i\rangle_1\frac{1}{\sqrt{p+q}}\sum_{k=1}^{p+q}|k\rangle_2|\acute{K}_{ki}\rangle_4,
\end{aligned}
\end{eqnarray}
where $U_{QMA}$ represents QMA acting on the fourth, fifth and sixth registers.

(3.8) Similar to stages (2.6)-(2.7), $|\psi_K\rangle$ can be obtained by implementing controlled rotation and fixed-point quantum search.

Note that we also complete the mean-centering of DCCA while we prepare the three states.

\emph{\textbf{Step 2.} Design the block-encoding of $\tilde{H}$. }

The block-encodings of $\rho_E$, $\rho_J$ and $\rho_K$ can be obtained easily according to Lemma 25 in Ref.~\cite{GSL}.

We now show how to build up the block-encoding of $\tilde{H}$. We first construct the block-encodings of $\rho_E^{-\frac{1}{2}}$ according to Lemma 9 in Ref.~\cite{CGJ}, and next realize the block-encodings of $\rho_E^{-\frac{1}{2}}\rho_J\rho_E^{-\frac{1}{2}}$ as well as $\rho_E^{-\frac{1}{2}}\rho_K\rho_E^{-\frac{1}{2}}$ by product of block-encoded matrices~\cite{GSL}. It is obvious that $(P_L,P_R)$ is a (2,1,0)-state-preparation-pair when $P_L=HX$ and $P_R=H$, where $H$ represents a Hadamard gate and $X$ is a Pauli-$X$ gate. Then, the block-encoding of $\tilde{H}$ can be created according to linear combination of block-encoded matrices~\cite{CGJ1}. We summarize the construction parameters of block-encodings as TABLE~\ref{tab:be}.

\begin{table*}[!htb]
\caption{\label{tab:be}
The construction of block-encodings.}
\begin{ruledtabular}
\begin{tabular}{ccc}
matrices  &block-encodings of matrices &parameters of block-encodings \\ \hline
$\rho_E$&$U_{\rho_E}=((U_E)^\dag\otimes I_s)(I_{a_E}\otimes SWAP_s)(U_E\otimes I_s)$ (Lemma 25 in Ref.~\cite{GSL})&$(1,a_E+s,2\epsilon_E)$  \\
$\rho_J$&$U_{\rho_J}=((U_J)^\dag\otimes I_s)(I_{a_J}\otimes SWAP_s)(U_J\otimes I_s)$ (Lemma 25 in Ref.~\cite{GSL})&$(1,a_J+s,2\epsilon_J)$\\
$\rho_K$&$U_{\rho_K}=((U_K)^\dag\otimes I_s)(I_{a_K}\otimes SWAP_s)(U_K\otimes I_s)$ (Lemma 25 in Ref.~\cite{GSL})&$(1,a_K+s,2\epsilon_K)$  \\
$\rho_E^{-\frac{1}{2}}$&$\tilde{U}_{\rho_E}$ see Lemma 9 in Ref.~\cite{CGJ}& $(2\kappa^{\frac{1}{2}},a',\epsilon_3)$\\
$F :=\rho_E^{-\frac{1}{2}}\rho_J\rho_E^{-\frac{1}{2}}$  &$U_{F}$ see Lemma 30 in Ref.~\cite{GSL}&$(4\kappa,a'',4\kappa^{\frac{1}{2}}(\epsilon_3+2\kappa^{\frac{1}{2}}\epsilon_J)$ \\
$G :=\rho_E^{-\frac{1}{2}}\rho_K\rho_E^{-\frac{1}{2}}$       &$U_{G}$ see Lemma 30 in Ref.~\cite{GSL} &$(4\kappa,a''',4\kappa^{\frac{1}{2}}(\epsilon_3+2\kappa^{\frac{1}{2}}\epsilon_K)$ \\
$\tilde{H}$       &$U_{\tilde{H}}$ see Lemma 52 in Ref.~\cite{CGJ1}&$(8\kappa,a'''+1,32\kappa^{\frac{3}{2}}(\epsilon_3+2\kappa^{\frac{1}{2}}\epsilon_J))$ \\
%$e^{i\tilde{H}t}$       &$U_{e^{i\tilde{H}t}}$, refer to theorem 7 in ~~\cite{CGJ}&$(1,a'''+3,|2t|\cdot32\kappa^{\frac{3}{2}}(\epsilon_3+2\kappa^{\frac{1}{2}}\epsilon_J))$ \\
\end{tabular}
\end{ruledtabular}
Here $U_E$, $U_J$, $U_K$ denote the unitary operations of preparing the states $|\psi_E\rangle$, $|\psi_J\rangle$, $|\psi_K\rangle$ respectively, and $\epsilon_E$, $\epsilon_J$, $\epsilon_K$ are their corresponding errors. $SWAP_s$ denotes a SWAP gate between the second register and an ancillary system, and $I_s$ is the identity operator acting on $s$ qubits. $s=\log(p+q)$, $a'=a_E+s+O(\log(\kappa^{\frac{3}{2}}\log\frac{1}{\epsilon_3}))$, $a''=a_J+s+2a'$, $a'''=a_K+s+2a'$, $\epsilon_E=O(\frac{\epsilon_3}{\kappa^{3/2}\log^3 (\frac{\kappa^{3/2}}{\epsilon_3})})$, $\kappa$ is the condition number of $\rho_E$. $a_E=\log n+\log m_1+4$, $a_J=\log c+\log m_2+1$, $a_K=a_J+3$, $\log m_1$ is the number of qubits in the third and fourth registers when preparing $|\psi_E\rangle$, $\log m_2$ is the number of qubits in the third, fourth, fifth and sixth registers when preparing $|\psi_J\rangle$.
\end{table*}

\emph{\textbf{Step 3.} Estimate the eigenvalues of $H$ according to quantum phase estimation, and then search the first $d$ largest eigenvalues of $H$ to get the corresponding eigenvectors $\{|\textbf{v}_i\rangle\}_{i=1}^d$. }

Given the block-encoding of $\tilde{H}$, the unitary $e^{i\tilde{H}t}$ can be implemented according to block-Hamiltonian simulation (Theorem 3 in Ref.~\cite{ICALP2019}). By using $e^{i\tilde{H}t}$, we apply quantum phase estimation on $\rho_0:=\frac{1}{p+q}\sum_{k=1}^{p+q}|k\rangle\langle k|$ to obtain an approximation to the state
\begin{eqnarray}
\begin{aligned}
\rho_1=\frac{1}{p+q}\sum_{k=1}^{p+q}|\lambda_k\rangle\langle \lambda_k|\otimes|\textbf{v}_k\rangle \langle \textbf{v}_k|,
\end{aligned}
\end{eqnarray}
where $\lambda_k$ and $\textbf{v}_k$ are the eigenvalues and eigenvectors of $H$. The state $\rho_0$ can be prepared easily by Hadamard and CNOT gates.

Afterwards, we invoke the quantum search algorithm for finding the maximum~\cite{ahuja1999} to find the first $d$ largest eigenvalues of $H$ and the corresponding eigenvectors $\{|\textbf{v}_i\rangle\}_{i=1}^d$.

\emph{\textbf{Step 4.} Postprocessing}.

According to TABLE~\ref{tab:be}, we first create a $(1,a_E+s,2\epsilon_E)$-block-encoding of $\rho_E$. Then, for each $|\textbf{v}_i\rangle$, we use quantum matrix inversion technique (Theorem 10 in Ref.~\cite{ICALP2019}) to get the state $|\textbf{w}_{i}\rangle :=\frac{(\rho_E)^{-1/2}|\textbf{v}_i\rangle}{\|(\rho_E)^{-1/2}|\textbf{v}_i\rangle\|_2}\propto \textbf{w}_{i}$, where $i=1,2,...,d$.

\subsection{Complexity analysis}\label{subsec:CA}

In this section we analyze the time complexity of each step of the QDCCA algorithm and summarize it as TABLE~\ref{tab:stc}.

\begin{table*}[!htb]
\caption{\label{tab:stc}
The time complexity of each step of the QDCCA algorithm.}
\begin{ruledtabular}
\begin{tabular}{ccc}
steps  &unitary operations&time complexity \\ \hline
\emph{\textbf{Step 1}} &$U_E$&$T_E=O(\frac{(\max_{ij}|M_{ij}|)^2\log(n(p+q))\log(1/\Delta_1)}{m_0\epsilon_1} )$ \\
  &$U_J$&$T_J=O(\frac{(\max_{ij}|M_{ij}|)^3\log(n(p+q))\log(cn'(p+q))\log(1/\Delta_1)\log(1/\Delta_2)}{m_0\epsilon_1\epsilon_2} )$\\
 &$U_K$&$T_K=T_J$ \\ \hline
\emph{\textbf{Step 2}} &$U_{\rho_E}$, $U_{\rho_J}$, $U_{\rho_K}$&$T_E$, $T_J$ and $T_K$ respectively \\
&$\tilde{U}_{\rho_E}$& $\tilde{T}_E=O(\kappa(a_E+s+T_E)\log^2(\frac{\kappa^{3/2}}{\epsilon_3}))$\\
        &$U_{F}$&$\tilde{T}_E+T_J$ \\
       &$U_{G}$&$\tilde{T}_E+T_K$ \\
       &$U_{\tilde{H}}$&$\tilde{T}_E+T_J+T_K$ \\
  %     &$U_{e^{i\tilde{H}t}}$&$O((|8\kappa t|+\log\frac{1}{|2t|\cdot \epsilon_{\tilde{H}}})(\tilde{T}_E+T_J+T_K))$ \\
        \hline
\emph{\textbf{Step 3}}  &quantum phase estimation, quantum search& $O(d\sqrt{p+q}(\frac{n(\max_{ij}|M_{ij}|)^2\kappa}{m_0^2\epsilon_4}+\log\frac{m_0^2\epsilon_4}{n(\max_{ij}|M_{ij}|)^2\epsilon_{\tilde{H}}})(\tilde{T}_E+T_J+T_K))$ \\        \hline
%\emph{\textbf{Step 4}} &quantum search& $O(d\sqrt{p+q})$ \\\hline
\emph{\textbf{Step 4}} &quantum matrix inversion technique & $O(\kappa T_{1-3}\log \kappa)$  \\\hline
all steps &---& $O(\frac{d\sqrt{p+q} n(\max_{ij}|M_{ij}|)^5\kappa^3 \log^2(\frac{\kappa^{3/2}}{\epsilon_3})\log^2(n(p+q))\log \kappa}{m_0^3\epsilon_1\epsilon_2\epsilon_4})$ \\
\end{tabular}
\end{ruledtabular}
Here we follow the notations in TABLE~\ref{tab:be}. In addition, $\epsilon_1$ is the error of $\bar{M}_{i,*}$, $\epsilon_2$ is the error of $\bar{\mathcal{M}}^i_{k,*}$, $\epsilon_{\tilde{H}}=32\kappa^{\frac{3}{2}}(\epsilon_3+2\kappa^{\frac{1}{2}}\epsilon_J))$, $\epsilon_4$ is the error of quantum phase estimation, $T_{1-3}$ is the total time complexity from \emph{\textbf{Step 1}} to \emph{\textbf{Step 3}}. For simplicity, the factors $\log(1/\Delta_1)$ and $\log(1/\Delta_2)$ can be consider as constants, and $a_E+s$ in $\tilde{T}_E$ can be ignored.
%and we use the notation $\tilde{O}(f(x))$ to indicate $O(f(x)\mathrm {polylog}(f(x))$.
\end{table*}

In \emph{\textbf{Step 1}}, for stage (1), since the complexity of Pauli-$X$ gates, $U_{QMA}$ and controlled rotation are much smaller than the complexity of other stages, we will neglect the complexity of stages (1.3), (1.5) and (1.6). The complexity of stage (1.1) is $O[\log (n(p+q))]$. For stage (1.2), by \emph{Lemma}~\ref{lem:ME}, we use the unitary $U_{mean}$ with complexity $O(\frac{\max_{ij}|M_{ij}|\log(n(p+q))\log(1/\Delta_1)}{\epsilon_1} )$ to get the target state with a probability at least $1-2\Delta_1$ where $\epsilon_1$ is the error of $\bar{M}_{i,*}$. The complexity of stage (1.4) comes mainly from $\textbf{O}_{M}$ and $U_M$, and the number of gates required of $U_M$ is roughly equal to $\textbf{O}_{M}$. Hence, the complexity of stage (1.4) is $O[\log(n(p+q))]$. For stage (1.7), we assume that the proportion of elements in $\left(\begin{matrix}
   X  \\
   Y\\
  \end{matrix}
  \right)$ with absolute value greater than $m_0>0$ is at least $\frac{1}{2}$ (or other reasonable constants), to obtain the state $|\psi_E\rangle$ with a probability close to 1, the complexity of the fixed-point quantum search is
\begin{eqnarray*}
\begin{aligned}
        \sqrt{\frac{2n(p+q)\alpha^2}{\|\acute{E}\|_F^2}}
&\leq\sqrt{\frac{2n(p+q)\alpha^2}{(1/2)n(p+q)(m_0)^2}}
\\&=O(\frac{\max_{ij}|M_{ij}|}{m_0}).
\end{aligned}
\end{eqnarray*}
Therefore, the complexity of stage (1) is $O(\frac{(\max_{ij}|M_{ij}|)^2\log(n(p+q))\log(1/\Delta_1)}{m_0\epsilon_1} )$.

  For stage (2) of \emph{\textbf{Step 1}}, the complexity of (2.3) and (2.6) can be neglected. Due to the use of $U_{mean}$ in (2.4), we should prepare the target state of stage (2.3) repeatedly. The complexity of stage (2.1) is $O[\log (c(p+q))]$. By \emph{Lemma}~\ref{lem:ME}, we get the target state of stage (2.2) with a probability at least $1-2\Delta_2$ in time $O(\frac{\max_{ij}|M_{ij}|\log(cn'(p+q))\log(1/\Delta_2)}{\epsilon_2} )$ where $\epsilon_2$ is the error of $\bar{\mathcal{M}}^i_{k,*}$, then the complexity of stage (2.4) is $O(\frac{(\max_{ij}|M_{ij}|)^2\log(n(p+q))\log(cn'(p+q))\log(1/\Delta_1)\log(1/\Delta_2)}{\epsilon_1\epsilon_2} )$.
  For stage (2.5), the complexity of $\textbf{O}_c$ is $O(\log c)$ and can be omitted. For stage (2.7), due to the fact that the characteristics of samples within-class are similar, we can assume that the proportion of elements in $\acute{J}$ with absolute value greater than $n''m_0$ is at least $\frac{1}{2}$ (or other reasonable constants) where $n''=\min(n_i)$, then the complexity of the fixed-point quantum search is
\begin{eqnarray*}
\begin{aligned}
\sqrt{\frac{c(p+q)\beta^2}{\|\acute{J}\|_F^2}}
       &\leq\sqrt{\frac{c(p+q)\beta^2}{(1/2)c(p+q)(n''m_0)^2}}
         \\&=O(\frac{\max_{ij}|M_{ij}|}{m_0}).
\end{aligned}
\end{eqnarray*}
Therefore, the complexity of stage (2) is $O(\frac{(\max_{ij}|M_{ij}|)^3\log(n(p+q))\log(cn'(p+q))\log(1/\Delta_1)\log(1/\Delta_2)}{m_0\epsilon_1\epsilon_2} )$.

For stage (3) of \emph{\textbf{Step 1}}, the complexity of stages (3.2), (3.6) and (3.7) can be neglected. The complexity of stage (3.1) is $O[\log (2c(p+q))]$. Since the complexity of $\tilde{U}_{\mathcal{M}}$ is same to $U$, the complexity of stage (3.3) is $O(\frac{\max_{ij}|M_{ij}|\log(cn'(p+q))\log(1/\Delta_2)}{\epsilon_2} )$ by \emph{Lemma}~\ref{lem:ME}.
For stage (3.5), we should prepare the target state of stage (3.4) repeatedly. Because the complexity of stages (3.1)-(3.4) is $O(\frac{\max_{ij}|M_{ij}|\log(cn'(p+q))\log(1/\Delta_2)}{\epsilon_2} )$, the complexity of stage (3.5) is $O(\frac{(\max_{ij}|M_{ij}|)^2\log(n(p+q))\log(cn'(p+q))\log(1/\Delta_1)\log(1/\Delta_2)}{\epsilon_1\epsilon_2} )$. The complexity of stage (3.8) is $O(\frac{\max_{ij}|M_{ij}|}{m_0})$. In summary, the complexity of stage (3) is equal to stage (2).

Let $\widetilde{|\psi_E\rangle}$ represent the approximate state of $|\psi_E\rangle$ which we prepared and $\epsilon_{ij}$ is the error of $\acute{E}_{ij}$. Then, the error of $|\psi_E\rangle$ is
\begin{small}
\begin{eqnarray*}
\begin{aligned}
\epsilon_E&=\bigg\|\widetilde{|\psi_E}\rangle-|\psi_E\rangle\bigg\|_2
\\&=\bigg\|\frac{1}{\|\widetilde{\acute{E}}\|_F}\sum_{ij}(\acute{E}_{ij}+\epsilon_{ij})|j\rangle|i\rangle-\frac{1}{\|\acute{E}\|_F}\sum_{ij}\acute{E}_{ij}|j\rangle|i\rangle\bigg\|_2
\\&\leq\bigg\|\frac{1}{\|\widetilde{\acute{E}}\|_F}\sum_{ij}(\acute{E}_{ij}+\epsilon_{ij})|j\rangle|i\rangle
-\frac{1}{\|\acute{E}\|_F}\sum_{ij}(\acute{E}_{ij}+\epsilon_{ij})|j\rangle|i\rangle\bigg\|_2
\\&+\bigg\|\frac{1}{\|\acute{E}\|_F}\sum_{ij}(\acute{E}_{ij}+\epsilon_{ij})|j\rangle|i\rangle
-\frac{1}{\|\acute{E}\|_F}\sum_{ij}\acute{E}_{ij}|j\rangle|i\rangle\bigg\|_2
\end{aligned}
\end{eqnarray*}
\end{small}

\begin{small}
\begin{eqnarray*}
\begin{aligned}
&=\bigg|1-\frac{\|\widetilde{\acute{E}}\|_F}{\|\acute{E}\|_F}\bigg|+\bigg\|\frac{1}{\|\acute{E}\|_F}\sum_{ij}\epsilon_{ij}|j\rangle|i\rangle\bigg\|_2
\\&=\bigg|1-\sqrt{\frac{\sum_{ij}(\acute{E}_{ij}+\epsilon_{ij})^2}{\sum_{ij}(\acute{E}_{ij})^2}}\bigg|+\sqrt{\frac{\sum_{ij}(\epsilon_{ij})^2}{\sum_{ij}(\acute{E}_{ij})^2}},
\end{aligned}
\end{eqnarray*}
\end{small}
where $\|\widetilde{\acute{E}}\|_F=\sqrt{\sum_{ij}(\acute{E}_{ij}+\epsilon_{ij})^2}$. For simplicity, we assume that $1-\sqrt{\frac{\sum_{ij}(\acute{E}_{ij}+\epsilon_{ij})^2}{\sum_{ij}(\acute{E}_{ij})^2}}\leq0$. A similar result can be obtained if it is greater than 0. Let $\epsilon_{ij}=\epsilon_1$, then
\begin{small}
\begin{eqnarray*}
\begin{aligned}
\epsilon_E&\leq\sqrt{1+\frac{2\sum_{ij}\acute{E}_{ij}\epsilon_{ij}}{\sum_{ij}(\acute{E}_{ij})^2}+\frac{\sum_{ij}(\epsilon_{ij})^2}{\sum_{ij}(\acute{E}_{ij})^2}}-1+\sqrt{\frac{\sum_{ij}(\epsilon_{ij})^2}{\sum_{ij}(\acute{E}_{ij})^2}}
\end{aligned}
\end{eqnarray*}
\end{small}

\begin{small}
\begin{eqnarray*}
\begin{aligned}
&\leq\sqrt{1+\frac{2n(p+q)\alpha\epsilon_1}{\frac{1}{2}n(p+q)m_0^2}+\frac{n(p+q)\epsilon_1^2}{\frac{1}{2}n(p+q)m_0^2}}-1
\\&+\sqrt{\frac{n(p+q)\epsilon_1^2}{\frac{1}{2}n(p+q)m_0^2}}
\\&\leq\sqrt{1+\frac{8\max_{ij}|M_{ij}|\epsilon_1}{m_0^2}+\frac{2\epsilon_1^2}{m_0^2}}-1+\frac{\sqrt{2}\epsilon_1}{m_0}.
\end{aligned}
\end{eqnarray*}
\end{small}

If $\max_{ij}|M_{ij}|$, $\max_{ij}|M_{ij}|/m_0=O(1)$, then $\epsilon_E=O(\epsilon_1)$. Similarly, we have $\epsilon_J=\epsilon_K= O(\epsilon_1+\epsilon_2)$ where $\epsilon_J$ and $\epsilon_k$ are the errors of $|\psi_J\rangle$ and $|\psi_K\rangle$ respectively.

Let $U_E$, $U_J$, $U_K$ denote the unitary operations of preparing the states $|\psi_E\rangle$, $|\psi_J\rangle$ and $|\psi_K\rangle$ respectively, and $T_E$, $T_J$, $T_K$ represent the complexity corresponding to them. Once $T_E, T_J, T_K$ are obtained, the complexity of \emph{\textbf{Step 2}} can be calculated easily by the used lemmas and theorem. As a conclusion, the complexity of designing the block-encoding of $\tilde{H}$ is $O(\kappa(a_E+s+T_E)\log^2(\frac{\kappa^{3/2}}{\epsilon_3})+T_J+T_K)$ where $\kappa$ is the condition number of $\rho_E$, $s=\log (p+q)$, $a_E=\log n+3+\log m_1$, $\log m_1$ is the number of qubits in the third and fourth registers when preparing $|\psi_E\rangle$, $\epsilon_3$ is the error of $\tilde{U}_{\rho_E}$ and $\tilde{U}_{\rho_E}$ is a block-encoding of $\rho_E^{-\frac{1}{2}}$. See TABLE \ref{tab:stc} for more details of the complexity of other steps.

In \emph{\textbf{Step 3}}, the complexity of preparing the state $\rho_0$ is $O(\log (p+q))$ and it can be ignored. By Theorem 3 in Ref.~\cite{ICALP2019}, we can implement $e^{i\tilde{H}t}$ with complexity $O((|8\kappa t|+\log\frac{1}{|2t|\cdot \epsilon_{\tilde{H}}})(\tilde{T}_E+T_J+T_K))$, where $\epsilon_{\tilde{H}}=32\kappa^{\frac{3}{2}}(\epsilon_3+2\kappa^{\frac{1}{2}}\epsilon_J))$, $\tilde{T}_E$ is the complexity of $\tilde{U}_{\rho_E}$.
Using $e^{i\tilde{H}t}$, the eigenvalues and eigenvectors of $H$ to accuracy $\epsilon_4$ can be obtained by applying quantum phase estimation to $\rho_0$ for time $t=O(\frac{tr(J)}{tr(E)}\cdot \frac{1}{\epsilon_4})$. The value of $\frac{tr(J)}{tr(E)}$ can be determined if we replace fixed-point quantum search by measurements in stages (1.7) and (2.7), and
\begin{eqnarray*}
\begin{aligned}
         \frac{tr(J)}{tr(E)}=\frac{\|\acute{J}\|_F^2}{\|\acute{E}\|_F^2}
        &\leq\frac{c(p+q)(n'\alpha)^2}{(1/2)n(p+q)(m_0)^2}
        \\&=O(\frac{n(\max_{ij}|M_{ij}|)^2}{m_0^2}).
\end{aligned}
\end{eqnarray*}
Next, we use the quantum search algorithm with query complexity $O(\sqrt{p+q})$ and $O(d)$ times of repetition is enough to get the first $d$ largest eigenvalues of $H$. Therefore, the complexity of \emph{\textbf{Step 3}} is $O(d\sqrt{p+q}(\frac{n(\max_{ij}|M_{ij}|)^2\kappa}{m_0^2\epsilon_4}+\log\frac{m_0^2\epsilon_4}{n(\max_{ij}|M_{ij}|)^2\epsilon_{\tilde{H}}})(\tilde{T}_E+T_J+T_K))$.

According to Theorem 10 in Ref.~\cite{ICALP2019}, the complexity of \emph{\textbf{Step 4}} is $O(\kappa T_{1-3}\log \kappa)$, where $T_{1-3}$ is the total complexity from \emph{\textbf{Step 1}} to \emph{\textbf{Step 3}}.

If $d$, $\max_{ij}|M_{ij}|$, $\max_{ij}|M_{ij}|/m_0=O(1)$ and let $1/\epsilon_1,1/\epsilon_2,1/\epsilon_4,\kappa=O(\log (n(p+q))$, the complexity of the QDCCA algorithm can be reduced to $\widetilde{O}(n\sqrt{p+q})$. Note that with $\widetilde{O}$ we hide polylogarithmic factors. Since the complexity of the classical DCCA algorithm is $O(n(p+q)+(p+q)^3)$, our quantum algorithm achieves a polynomial speedup in the dimension of samples over the classical algorithm.

%where we use the notation $\widetilde{O}(f(x))$ to indicate $O(f(x)\mathrm {polylog}(f(x))$

\section{conclusion}\label{Sec:con}

In conclusion, we have proposed a QDCCA algorithm with rigorous complexity analysis. It has been shown that our quantum algorithm achieves a polynomial acceleration on the dimension of samples  over its classical counterpart when $d$, $\max_{ij}|M_{ij}|$, $\max_{ij}|M_{ij}|/m_0=O(1)$ and $1/\epsilon_1,1/\epsilon_2,1/\epsilon_4,\kappa=O(\log (n(p+q))$. The \emph{Lemma}~\ref{lem:ME} presented an efficient method to compute the mean of elements in a row of any real matrix, which can be reused as a subroutine for other quantum algorithms. Moreover, in the QDCCA algorithm, we completed the mean-centering when we prepared the density operators. It can be a separate quantum algorithm (called QMS algorithm) if we combine \emph{Lemma}~\ref{lem:ME} with QMA to realize $|i\rangle|j\rangle|0\rangle\rightarrow |i\rangle|j\rangle|\tilde{M}_{ij}\rangle$ where $\tilde{M}_{ij}=M_{ij}-\bar{M}_{i,*}$. The QMS algorithm achieves an exponential speedup in both the number of samples and their dimension over the classical MS algorithm. We can also modify the QMS algorithm to perform other data preprocessing operations, for example, Z-score standardization~\cite{HAN201283}. We hope that the techniques we presented in this paper will inspire others to explore more potential quantum algorithm applications in the future, such as expediting other classical preprocessing operations, solving the generalized eigenvalue problem under certain circumstances.

\section*{Acknowledgements}
We thank Linchun Wan and Mingchao Guo for useful discussions on the subject. This work is supported by National Natural Science Foundation of China (Grant Nos.  61976024, 61972048) and Beijing Natural Science Foundation (Grant No. 4222031).

\appendix

\section{Proof of \emph{Lemma} \ref{lem:ME}}
\label{app:proof}
Let us start by describing a procedure $U_y$ to estimate $\bar{L}_{i,*}$ of matrix $L$, and the idea behind $U_y$ is to calculate the mean by the inner product. We start with the initial state $|i\rangle_1|0\rangle_2|0\rangle_3^{\otimes\log d_2}|0\rangle_4$, the processes of $U_y$ are as follows.

(1) Perform a Hadamard gate on the second register, then apply $H^{\otimes \log d_2}$ to the third register to get
\begin{eqnarray*}
\begin{aligned}
|i\rangle_1\frac{1}{\sqrt{2}}(|0\rangle_2+|1\rangle_2)\frac{1}{\sqrt{d_2}}\sum_{j=1}^{d_2}|j\rangle_3|0\rangle_4.
\end{aligned}
\end{eqnarray*}

(2) Considering the second register as the control register, we perform controlled $\textbf{O}_L$ on the first, third and fourth registers to get
\begin{eqnarray*}
\begin{aligned}
&|i\rangle_1\frac{1}{\sqrt{2}}\bigg(|0\rangle_2\frac{1}{\sqrt{d_2}}\sum_{j=1}^{d_2}|j\rangle_3|L_{ij}\rangle_4
\\&+|1\rangle_2\frac{1}{\sqrt{d_2}}\sum_{j=1}^{d_2}|j\rangle_3|0\rangle_4\bigg).
\end{aligned}
\end{eqnarray*}

(3) Append an ancillary qubit $|0\rangle$ and then perform a appropriate controlled rotation on the ancillary qubit to get
\begin{eqnarray*}
\begin{aligned}
&|i\rangle_1\frac{1}{\sqrt{2}}\bigg[|0\rangle_2\frac{1}{\sqrt{d_2}}\sum_{j=1}^{d_2}|j\rangle_3|L_{ij}\rangle_4\bigg(\frac{L_{ij}}{C}|0\rangle_5
\\&+\sqrt{1-(\frac{L_{ij}}{C})^2}|1\rangle_5\bigg)
+|1\rangle_2\frac{1}{\sqrt{d_2}}\sum_{j=1}^{d_2}|j\rangle_3|0\rangle_4|0\rangle_5\bigg],
\end{aligned}
\end{eqnarray*}
where $C=\max_{ij}|L_{ij}|$.

(4) Uncompute the fourth register and let $|\phi_i\rangle :=\frac{1}{\sqrt{d_2}}\sum_{j=1}^{d_2}|j\rangle_3|0\rangle_4\bigg(\frac{L_{ij}}{C}|0\rangle_5
+\sqrt{1-(\frac{L_{ij}}{C})^2}|1\rangle_5\bigg)$, $|\varphi\rangle :=\frac{1}{\sqrt{d_2}}\sum_{j=1}^{d_2}|j\rangle_3|0\rangle_4|0\rangle_5$, then perform a Hadamard gate on the second register to get
\begin{eqnarray*}
\begin{aligned}
|i\rangle_1\bigg[\frac{1}{2}|0\rangle_2\bigg(|\phi_i\rangle+|\varphi\rangle\bigg)
+\frac{1}{2}|1\rangle_2\bigg(|\phi_i\rangle-|\varphi\rangle\bigg)\bigg].
\end{aligned}
\end{eqnarray*}

The probability of obtaining $|1\rangle$ when the second register is measured is  $P_{i1}=\frac{1-\langle\phi_i|\varphi\rangle}{2}$. It is obviously that $\langle\phi_i|\varphi\rangle=\frac{\sum_{j=1}^{d_2}L_{ij}}{d_2C}=\frac{\bar{L}_{i,*}}{C}, i=1,2,...,d_1$.

By swapping the registers, we can rewrite $|1\rangle_2(|\phi_i\rangle-|\varphi\rangle)$ as $|y_i,1\rangle$, and hence we have the final mapping
\begin{eqnarray*}
\begin{aligned}
U_{y}: |i\rangle|0\rangle\rightarrow|i\rangle\bigg(\sqrt{P_{i1}}|y_i,1\rangle+\sqrt{1-P_{i1}}|G_i,0\rangle\bigg)\ \ \
\end{aligned}
\end{eqnarray*}
which can be carried out in time $O(\log (d_1d_2))$, where $|G_i\rangle$ is a garbage state.

Then, similar to Ref.~\cite{qmeans2019}, we can use $U_{y}$, amplitude estimation~\cite{BHMT} and median evaluation~\cite{WKS2015} to get a quantum state $|\psi_i\rangle$ for any $\Delta>0$ such that,
\begin{eqnarray*}
\begin{aligned}
\parallel|\psi_i\rangle-|0\rangle^{\otimes ls}|\tilde{P_{i1}},G\rangle\parallel_2\leq\sqrt{2\Delta},
\end{aligned}
\end{eqnarray*}
where $l$ is an integer, $s$ is the number of qubits in $|\tilde{P_{i1}},y_i,1\rangle$, $|\tilde{P_{i1}}-P_{i1}|\leq \epsilon$ and $|G\rangle$ is a garbage register.
The running time of the procedure is $O(\frac{\log (d_1d_2)\log\frac{1}{\Delta}}{\epsilon})$.

Finally, we can easily compute $\bar{L}_{i,*}=C(1-2\tilde{P_{i1}})$. If we want to have in the end an absolute error $\epsilon$, we should control the error of amplitude estimation as $\frac{\epsilon}{2C}$. Therefore, the total time complexity of $U_{mean}$ is $O(\frac{\max_{ij}|L_{ij}| \log (d_1d_2)\log\frac{1}{\Delta}}{\epsilon})$ where $\epsilon$ is the error of $\bar{L}_{i,*}$.

This concludes the proof of \emph{Lemma}~\ref{lem:ME}.
\\
\section{Parameters analysis }
\label{app:pa}

In this Appendix we analyze the choice of parameters $\alpha$ and $\beta$ in parts (1) and (2) respectively.

(1) The choice of parameter $\alpha$. If we want to perform the controlled rotation effectively, the condition of $\alpha\geq \max_{ij}|\acute{E}_{ij}|$ must be satisfied. Moreover,
\begin{eqnarray*}
    \begin{aligned}
        \max_{ij}|\acute{E}_{ij}|
        &=\max_{ij}|M_{ij}-\bar{M}_{i,*}|
        \\&\leq\max_{ij}(|M_{ij}|+|\bar{M}_{i,*}|)
        \\&\leq2\max_{ij}|M_{ij}|.
    \end{aligned}
    \end{eqnarray*}
Then, we can chose $\alpha=2\max_{ij}|M_{ij}|$ to make sure that $\frac{\acute{E}_{ij}}{\alpha}$ is no more than 1.

(2) The choice of parameter $\beta$. Note that $M$ can be rewritten as $[M^1,...,M^c]$ where $M^i=\left(\begin{matrix}
   A^i \\
   B^i\\
  \end{matrix}
  \right), i=1,...,c,$
  and
\begin{eqnarray*}
    \begin{aligned}
        \max_{ki}|\acute{J}_{ki}|
        &=\max_{ikj}\bigg(\sum_{j}|M_{kj}^i-\bar{M}_{k,*}|\bigg)
        \\&\leq\max_{ikj}\bigg[\sum_{j}\bigg(|M_{kj}^i|+|\bar{M}_{k,*}|\bigg)\bigg]
        \\&\leq2n'\max_{ij}|M_{ij}|.
    \end{aligned}
    \end{eqnarray*}
Hence, we can chose $\beta=2n'\max_{ij}|M_{ij}|$.

\bibliography{refe}

\end{document}